\apptocmd{\thebibliography}{\global\c@NAT@ctr 0\relax}{}{}
   \renewcommand{\theequation}{\thesection.\arabic{equation}}
\newtheorem{theorem}{Theorem}[section]
\newtheorem{lemma}{Lemma}[section]
\newtheorem{assumption}{Assumption}[section]
\newtheorem{corollary}{Corollary}[section]
\newtheorem{definition}{Definition}[section]
\def\psii{U}
\def\UU{\eta}
\def\UUU{\zeta}
\date{Submitted: March 18, 2014. Revised: November 25, 2014 }
\begin{document}
\title{
On asymptotic optimality of Merton's myopic
portfolio strategies for discrete time market  
}
\author{Alexandra Rodkina
\footnote{ Department of Mathematics,  The University of the West
Indies, Kingston 7, Jamaica }
and Nikolai Dokuchaev\footnote{Department
of Mathematics and Statistics, Curtin University, Australia}\footnote{Corresponding author}}


\maketitle

\begin{abstract}
This paper studies the properties of discrete time stochastic optimal
control problems associated with portfolio selection.
We investigate if  optimal  continuous time strategies  can be
  used effectively for a discrete time market after a straightforward discretization.
  We found  that Merton's  strategy approximates
  the performance of the optimal strategy in a discrete
time model with the  sufficiently small time steps.\\
{\em Keywords}: optimal portfolio, utility, discretization, discrete time It\^o formula
\\{\bf
MSC}: 93E20,
91G10.
\end{abstract}

\def\uu{u}

\section{Introduction}  The paper studies discrete time stochastic optimal control problems
and their relationships with continuous time optimal control
problems. More precisely, we study optimal investment problems where
$\mathbf E U(X_T)$ is to be maximized. Here $X_T$ represents the
total wealth at final time $T$, and $U(\cdot)$ is a utility
function. We consider the case where $U(x)=x^\alpha$, $\alpha\in
(0,1)$. For continuous time market models, these utilities have a
special significance, in particular, because the optimal strategies
for them  are known explicitly  (so-called Merton's strategies).
These strategies are {\it myopic}; they depend only on the current
observations of the market parameters, including the risk free rate,
the appreciation rate, and the volatility matrix, even for the case
of unknown prior distributions and evolution law. The optimality of
Merton's strategies still holds when the market parameters are
random and independent of the driving Brownian motion, i.e., in the
case of "totally unhedgeable" coefficients, according to \citet{KS},
Chapter~6. The solution that leads to myopic strategies goes back to
\citet{Merton}; the case of random coefficients was discussed in
\citet{KS} and \citet{DH}. \index{Dokuchaev and Haussmann (2001).}
\par
 The real stock
prices are presented as time series, so the discrete time
(multi-period) models are more natural than continuous time models.
On the other hand, continuous-time models give a good description of
distributions and  often allow explicit solutions of optimal
investment problems.
  \par
   For a real market,  a formula for an optimal strategy derived for a continuous-time model can often be
effectively used after the natural discretization. However, this
strategy will not be  optimal for time series observed in the real
market. Therefore, it is important to extend the class of discrete
time models that allow myopic and explicit optimal portfolio
strategies. The problem of discrete-time portfolio selection has
been studied in the literature, such as in  \citet{Sm}, \citet{Le},
\citet{Mo,Merton,Sa,Fa,Haa,Hab,EG,Fr,DL,Os,GH,Pl,LN,Xu,CO,Z}. \index{\citet{Sm}, \citet{Le},
\citet{Mo}, \citet{Merton}, \citet{Sa}, \citet{Fa}, \citet{Haa},
\citet{Hab}, \citet{EG}, \citet{Fr}, \citet{DL}, \citet{Os},
\citet{GH}, \citet{Pl}, \citet{LN}, \citet{Xu}, \citet{CO},
\citet{Z}.}  If a discrete time
market model is complete, then the martingale method can be used
(see, e.g., \citet{Pl}). Unfortunately, a discrete time market model
can be complete only under very restrictive assumptions. For
incomplete discrete time markets, the main tool is dynamic
programming that requires to derive and solve a backward Bellman equation with a Cauchy condition at the
terminal time. For the general case, this procedure  involves  recalculation of  the conditional densities
at each time step which is numerically challenging (see, e.g.,
\citet{Pl} or  \citet{GS}). This is why the optimal investment
problems for discrete time can be more difficult than for the continuous
time setting where explicit solutions are often possible.
\par
There are several special cases when an investment problem allows for an
explicit solution in discrete time, and, for some cases, optimal
strategies are myopic (see \citet{Le}, \citet{Mo}, \citet{Haa}).
However, the optimal strategy is not myopic and it cannot be
presented explicitly for power utilities in the general case.
\citet{Haa}  showed that the optimal strategy is not myopic for
$U(x)=\sqrt{x}$ if returns have serial correlation and evolve as a
Markov process.
 \par
 In a mean-variance discrete time multi-period
setting, the optimal strategies represent some analog of Merton's
strategies. These strategies are myopic for mean-variance goal
achieving problems and non-myopic if these goals have to be selected
to solve a problem with constraints; see  \citet{LN}, \citet{D10},
\citet{Z}. It appears that the problems with utility functions
$\psii(x)=x^\alpha$, $\alpha<1$, have different properties with
respect to time disretization. In particular, \citet{D07}
demonstrated that the direct discretization of continuous time
optimal Merton's strategies does not approximate the optimal
strategy for the discrete time market for concave utility functions
$\psii(x)=x^\alpha$ such that $\psii(x)=-\infty$ for $x<0$. More
precisely,
 the difference between the optimal expected utilities for discrete
time and continuous time models does not disappear if the number of
periods (or frequency of adjustments) grows. As the result, the
optimal expected utility calculated for a continuous time market
cannot be approximated by piecewise constant strategies with
possible jumps at given times $(t_k)_{k=1}^N$, even if $N\to
+\infty$ and $t_k-t_{k+1}\to 0$.
 \par
In the present paper, we  reconsider  the discrete time  optimal
portfolio selection problems. We suggest a solution based on myopic
Merton's strategies that are optimal for   related continuous time
portfolio selection problems.  We investigate the limit properties
of the discrete time optimal portfolio selection problem when the
step of the discretization converges to zero.  We found that the
performance of the discrete time strategy obtained directly from
Merton's strategy approximates  the optimal strategy. This
suboptimal discrete time strategy is myopic.  We consider the case
of Gaussian noise in the discrete time equation for the price. This
means that the stock price can be negative with non-zero
probability; this feature does not affect the validity of the model
since this probability converges to zero as the step of
discretization converges to zero; see \citet{AGRK}.  The proof is
based  on the application of the variant of the discrete It\^o
formula first introduced by \citet{ABR1}. \index{Appleby, Berkolaiko
\& Rodkina (2009).} It can be noted that the proof does not use the
dynamic programming principle.
\par
  These results lead to a conclusion that Merton's strategies can be
  used effectively for the discrete time multi-period market models with power
  utilities $U(x)=x^\alpha$, $\alpha<1$ that have sufficiently small time steps
  and approximate the continuous time model. This seems to  contradict to the result from
  \citet{D07}. However, there is not a contradiction. In the present paper, we
assumed that  $\psii(x)=Lx$ for $x<0$, where  $L>0$ can be selected
to be arbitrarily large. On the other hand, \citet{D07}  assumed
that $\psii(x)=-\infty$ for $x<0$. This difference in the problem
setting leads to different conclusions. Note that the utility
function considered in the present paper is not concave; however,
its shape is becoming  "more concave" as $L\to +\infty$. Moreover,
the impact  of non-concavity of $U$ for any given $L$ disappears
since this probability converges to zero as the step of
discretization converges to zero. We illustrate this in numerical experiments to demonstrate the impact  of the size of the
interval of discretization on
the performance of Merton's strategy and the impact  of the selection  of finite $L$ in the adjusted utility function.
In these experiments, we found that weekly portfolio adjustments is sufficient to compensate the discretization error for Merton's strategy.
Moreover, we found that this error is almost negligible for  the model with daily portfolio adjustments.


\section{ Problem setting}
\label{sec:int}
In this paper we consider the following controlled stochastic difference equation
\begin{equation}
\label{eq:discrw} \begin{split}
&x_{n+1}=x_n\left(1+hu_na_n+\sqrt{h}u_nb_n\xi_{n+1}\right), \quad n=0, 1, \dots, N-1,\\
 &x(0)=x_0>0, \end{split}
\end{equation}
where  $x_0>0$ is nonrandom, $\xi_n$ are random variables, $a_n$,
$b_n$ are nonrandom coefficients,  $u_n$ is a nonrandom control
(strategy), $n=0, 1, \dots, N-1$, $N\in \mathbf N$, $h>0$ is a small
parameter, calculated as
\begin{equation}
\label{def:h}
h:=\frac TN.
\end{equation}
The value $T>0$  is fixed throughout all paper, but $N$ can increase, which makes $h$ decrease.

We can  either  consider equation \eqref{eq:discrw} independently,
or  think about it  as the Eulier-Maruyama discretization of the
following It\^o stochastic equation
\begin{equation}
\label{eq:contw}
dX_t=X_tu(t)\bigl(a(t)dt+b(t)dW_t\bigr), \quad t\in [0, T], \quad X(0)=x_0,
\end{equation}
where $W$ is a standard  Wiener process,  $b, a, u: [0, T]\to
\mathbf R$ are continuous nonrandom functions.  In this setting  $h$
is a step size of discretization of the interval $[0, T]$ and $N$ is
a number of corresponding mesh points.

We recall that the Euler-Maruyama numerical method for equation
\eqref{eq:contw}  computes approximations $x_n(h)\approx X_{nh}$ by
\begin{equation}
\label{EM_def}
x_{n+1}(h)=x_n(h)\bigl(1 + hu(nh)a(nh) + u(nh)b(nh)\Delta W_{n+1}\bigr),
\end{equation}
where $h>0$ is the constant step size and $\Delta W_{n+1}=W((n+1)h)-W(nh)$. We see that when
\[
\xi_{n+1}=\frac{W((n+1)h)-W(nh)}{\sqrt {h}}, \quad a_n=a(nh), \quad
b_n=b(nh), \quad  u_n=u(nh), \quad\] \eqref{EM_def} coincides with
\eqref{eq:discrw} and $\xi_{n+1}$ is a standardized normal random
variable. More information  about Euler-Maruyama discretization and
stochastic difference equations could be found, e.g.,  in
\citet{HMS, KP,  AGRK, ABR1}.

Assume that the following assumptions hold.
\begin{assumption}
\label{as:ab}
Sequences \, $(a_n)_{n\in \mathbf N}$  \, and \, $(b_n)_{n\in \mathbf N}$\,\,
are  nonrandom  and such that for some \\$\hat a$, $\hat b$, $\underline a$, $\underline b>0$, \, $\underline a\le \hat a$, \, $\underline b\le \hat b$,
\begin{equation}
\label{bounds:ab}
\underline a\le|a_n|\le \hat a, \quad\underline b\le | b_n|\le \hat b, \quad \forall n\in \mathbf N.
\end{equation}
\end{assumption}

\begin{assumption}
\label{as:xi}
$(\xi_n)_{n\in \mathbf N}$ is a sequence of  independent and $\mathcal N(0, 1)$ distributed random variables.
\end{assumption}
Let $(\Omega, {\mathcal{F}}, \{{\mathcal{F}}_n\}_{n \in \mathbf N},
{\mathbb P})$ be a complete filtered probability space. We assume that the filtration
$\{{\mathcal{F}}_n\}_{n  \in \mathbf N}$ is naturally generated: $\mathcal{F}_{n+1} = \sigma \{\xi_{i+1} :
i=0,1,...,n\}$, where  the  sequence $(\xi_{n})_{n\in \mathbf N}$ satisfies Assumption \ref{as:xi}.

We use the standard abbreviation ``{\it a.s.}'' for the wordings
``almost sure'' or ``almost surely'' throughout the text.

Among all the sequences $(x_n)_{n \in \mathbf N}$ of the random
variables we distinguish those for which $x_n$ are
${\mathcal{F}}_n$-measurable for all $n \in \mathbf N$.
A detailed exposition of the definitions and facts of the theory of random processes can be
found, e.g., in \citet{Shiryaev96}.

Define  for some $\alpha\in (0,1 )$ and $L>0$,
\begin{equation}
\label{def:psi}
\psii(x)=x^\alpha, \,\,  x\ge 0, \quad \psii(x)=Lx, \,\, x< 0.
\end{equation}

\begin{definition}
\label{def:adstrat}
For a given $N\in\mathbf{N}$, the set $ \mathcal {U}=\mathcal {U}(N)$ of  admissible strategies is the set of all  nonrandom   vectors $\uu =(u_n)_{n=0}^{N-1}$  such that
 \begin{equation}
\label{bound:pi}
\underline u \le |u_n|\le \hat u, \quad n=0,1,...,N-1,
\end{equation}
for some positive numbers $\underline u$, $\hat u$, \, $0< \underline {u}\le \hat u$.
\end{definition}
Up to the end of the paper, we  consider the following optimal control
problem:
 \begin{equation}
\label{def:opt0}
\hbox{Maximize}\quad
\mathbf E\left[\psii(x_N)\right]\quad \hbox{over}\quad u\in\mathcal {U},
\end{equation}
where $x$ is a solution to \eqref{eq:discrw} with $h=\frac TN$ and admissible strategy $\uu $, \, $\mathcal {U }=\mathcal {U }(N)$ is  the set of all admissible strategies  introduced in Definition  \ref{def:adstrat}.

\section{Optimal portfolio selection and the main result}
Problem  \eqref{def:opt0} has applications for optimal portfolio selection.   It appears that
 (\ref{eq:discrw}) describes the dynamic of the total wealth $x_n$ of an investor at time period $n$ for a single stock discrete time market model
 with a risk-free investment with zero return. The dynamic of the stock price is described by the equation \begin{equation}
\label{stock}
s_{n+1}=s_n\left(1+ha_n+\sqrt{h}b_n\xi_{n+1}\right), \quad n=0, 1, \dots, N-1, \quad s_0=1.
\end{equation}
It is assumed that the portfolio is distributed among the
shares of this stock and the risk-free investment with zero return.
A strategy $\uu $ represents a dynamically selected ratio of investment in stock.
More precisely, let  $\gamma_n$ be the quantity of stock shares in the portfolio at time $n$,
then $u_n=\gamma_n s_n/x_n$, where $\gamma_ns_n$ is the current
value of the stock portfolio, $x_n$ is the current total value of the portfolio.
We select the strategy $\uu$  in the class of admissible processes described above
and calculate the quantity of shares $\gamma_n=u_nx_n/s_n$; effectively, we  select  {\em closed-loop} strategies.
It is assumed that the strategy is {\em self-financing}, i.e.,
$$
x_{n+1}-x_n=\gamma_n(s_{n+1}-s_n),\quad n=0,1,2,...,
$$
where $\gamma_n=u_nx_n/s_n$ is the quantity of stock shares in the portfolio at time $n$.
This assumptions means that the model does not include an external sources of funds  and that there is no expenses, transaction costs,  and dividend payments. The increments
of the wealth are defined solely by the stock price changes and by the quantity of the shares.

In fact, the case of non-zero return for the risk free asset is also covered by this model, if one interprets $x_n$ as the discounted wealth  and $s_n$ as the discounted stock price (discounted with respect to the risk-free asset).   A more detailed market model
description can be found, e.g., in \citet{Pl, D07}.

For this discrete time market model,  a standard
problem of optimal portfolio selection is to maximize the expectation of the utility function ${\bf U}(x_N)$ of  the terminal wealth $x_N$, i.e.,  to find  a  strategy ${\uu ^*}$ which solves optimal control problem
 \begin{equation}
\label{sU}
\hbox{Maximize}\quad
\mathbf E\left[{\bf U}(x_N)\right]\quad \hbox{over}\quad  \mathcal {U},
\end{equation}
where ${\bf U}$ is some given concave utility function,
  $x$ is a solution to \eqref{eq:discrw} with $h=\frac TN$, $ \mathcal {U}$ is a set of all admissible strategies  according to Definition  \ref{def:adstrat}.

Further, It\^o equation (\ref{eq:contw}) describes the evolution of the total wealth $X_t$
for a single stock continuous market model with zero risk-free interest rate where the stock price evolution is described by the It\^o equation
\begin{equation}
\label{eq:conts} \begin{split}
&dS_t=S_t\bigl(a(t)dt+b(t)dW_t\bigr), \quad t\in [0, T], \quad S_0=1.
\end{split}
\end{equation}
For this continuous time market model,  a standard optimal portfolio selection problem is to maximize the expectation of the utility function ${\bf U}(X_T)$ of  the terminal wealth $X_T$, i.e.,  to find  a strategy  $u:[0,T]\times\Omega\to{\bf R}$ in a certain class of admissible  strategies $\overline{\mathcal U}$ that solves optimal control problem
 \begin{equation}
\label{sUc}
\hbox{Maximize}\quad
\mathbf E\left[{\bf U}(X_T)\right]\quad \hbox{over}\quad \overline{\mathcal U},
\end{equation}
where  $X_t$ is a solution to \eqref{eq:contw}.
For the case when ${\bf U}(x)=x^\alpha$, $\alpha\in (0,1)$, the following so-called Merton strategy
\begin{equation}
\label{Mer}
u^*(t)=\frac{a(t)}{(1-\alpha)b^2(t)}, \quad t\in[0,T],
\end{equation}
is optimal in the continuous time setting (\ref{sUc}) in the class of
admissible strategies that include  all bounded random processes
adapted to the filtration generated by $S_t$; see, e.g., \citet{KS},
Chapter 6, and \citet{Merton}. In fact, this strategy is optimal in
a even wider class of random and adapted processes $u(t)$, as well as
in the setting with random $a(t)$ and $b(t)$ that are independent from
$W_t$.

It can be seen that problem (\ref{def:opt0}) is in fact a  modification of problem (\ref{sU}).
Note that the "utility function" $U(x)$ in (\ref{def:opt0})
is not concave in $x\in R$; however, its shape is becoming  "more concave" as $L\to +\infty$. \index{Moreover,
we will show that the impact  of non-concavity of $U$ for any given $L$ disappears
since the probability that the wealth achieves zero vanishes as $h\to 0$, for a certain choice of the strategy.}

Consider the strategy  ${\uu ^*}$  such that
\begin{equation}
\label{def:pin}
u^*_n=\frac{a_n}{(1-\alpha)b^2_n}, \quad n=0, 1, \dots, N-1.
\end{equation}
Condition  \eqref{bound:pi} is satisfied for this strategy.  Notice that this strategy represents a direct discretization of Merton's strategy (\ref{Mer}).
It can be also noted that strategy (\ref{def:pin})  does not depend on the choice of $L$.

Our main result can be formulated as the following.
\begin{theorem}\label{ThM}
The  strategy $u^*$ defined by \eqref{def:pin}  maximizes $\mathbf E \psii(x_N)$ approximately for small enough $h=\frac TN$, meaning that
\[
 \sup_{u} \mathbf E \psii(x_N) =\mathbf E \psii(x^*_N)+O(h)\quad \hbox{as}\quad h\to 0,
\]
where $x^*_N$ is the terminal wealth for strategy \eqref{def:pin} and $O(h)\to 0$ as $h\to 0$, independently on $N$.
\end{theorem}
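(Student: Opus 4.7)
The plan is to compute $\mathbf E\psii(x_N)$ asymptotically as $h\to 0$ by isolating the one-step expected contribution of each period and then recognizing Merton's formula as the pointwise maximizer of that contribution. First I would decompose
\begin{equation*}
\mathbf E\psii(x_N)=\mathbf E\bigl[x_N^\alpha\mathbf{1}_{\{x_N\ge 0\}}\bigr]+L\,\mathbf E\bigl[x_N\mathbf{1}_{\{x_N<0\}}\bigr].
\end{equation*}
Each one-step factor $1+hu_na_n+\sqrt h\,u_nb_n\xi_{n+1}$ is negative only when $|\xi_{n+1}|\gtrsim 1/(\sqrt h\,\hat u\hat b)$, an event of Gaussian probability at most $e^{-c/h}$ with $c=c(\hat u,\hat b)>0$. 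A union bound over $n=0,\dots,N-1$ gives $\mathbb P(x_N<0)\le (T/h)e^{-c/h}=o(h^k)$ for every $k>0$, uniformly in $u\in\mathcal U$. Together with an elementary $L^2$-bound on $x_N$, this shows that the linear-penalty term $L\,\mathbf E[x_N\mathbf{1}_{\{x_N<0\}}]$ is much smaller than the desired $O(h)$ and can be absorbed into it.

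Next I would apply the variant of the discrete It\^o formula from \citet{ABR1} to $\phi(x)=x^\alpha$. On the truncation event $|\xi_{n+1}|\le M$, with $M\sqrt h$ chosen small enough that the one-step factor stays in a fixed compact subset of $(0,\infty)$, a Taylor expansion of $(1+y)^\alpha$ yields
\begin{equation*}
\mathbf E\bigl[x_{n+1}^\alpha\mid\mathcal F_n\bigr]=x_n^\alpha\bigl(1+h\,G_n(u_n)+O(h^2)\bigr),\qquad G_n(u):=\alpha a_n u-\tfrac12\alpha(1-\alpha)b_n^2u^2,
\end{equation*}
with $O(h^2)$ uniform in $u\in\mathcal U$; the leading error is $h^2$ rather than $h^{3/2}$ because the odd moments $\mathbf E\xi_{n+1}$ and $\mathbf E\xi_{n+1}^3$ vanish. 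The complementary tail $\{|\xi_{n+1}|>M\}$ contributes at most $e^{-c/h}$. Iterating over $n$ and taking logarithms,
\begin{equation*}
\mathbf E x_N^\alpha=x_0^\alpha\prod_{n=0}^{N-1}\bigl(1+hG_n(u_n)+O(h^2)\bigr)=x_0^\alpha\exp\!\Bigl(h\sum_{n=0}^{N-1}G_n(u_n)+O(h)\Bigr),
\end{equation*}
where the outer $O(h)$ comes from $N\cdot O(h^2)$ under $Nh=T$ and is again uniform in admissible $u$.

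The quadratic $u\mapsto G_n(u)$ is concave, since $\alpha(1-\alpha)>0$, and attains its unique maximum precisely at the Merton value $u_n^*=a_n/[(1-\alpha)b_n^2]$, so that $h\sum_{n=0}^{N-1}G_n(u_n)\le h\sum_{n=0}^{N-1}G_n(u_n^*)$ with equality when $u=u^*$. Substituting back into the displayed exponential gives the announced estimate $\sup_u\mathbf E\psii(x_N)=\mathbf E\psii(x_N^*)+O(h)$. The step I expect to be most delicate is controlling the Taylor remainder while simultaneously handling the event $\{x_{n+1}\le 0\}$ on which $\phi(x)=x^\alpha$ is ill-defined: this is exactly the role of the truncation $|\xi_{n+1}|\lesssim 1/\sqrt h$, which keeps the one-step factor bounded away from zero on the good set and confines the non-concavity of $\psii$ to a complementary event of super-polynomially small probability. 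Verifying that all implied constants remain bounded as $N\to\infty$ with $Nh=T$ fixed is what turns these heuristics into the required uniform $O(h)$ bound.
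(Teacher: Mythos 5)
Your proposal is correct in substance and follows the same architecture as the paper's proof: a one-step Taylor (discrete It\^o) expansion of the power function around $1$, a bound showing that positivity of the wealth fails only on an event of small probability, an $L^2$ bound to control the linear penalty on that event, and pointwise maximization of the per-step factor at Merton's value. Within that architecture you make several choices that differ from the paper. (i) Where you truncate the Gaussian innovation to $\{|\xi_{n+1}|\le M\}$ with $M\sqrt h$ small, the paper instead mollifies the test function, replacing $\phi(x)=|x|^\alpha$ by a $\phi_\delta$ with bounded third derivative near the singularity at $0$ and estimating $\mathbf E|\phi-\phi_\delta|$ on the set where the one-step factor lands in $(-\delta,\delta)$; the two devices play the same role. (ii) You bound the probability of a negative one-step factor by a Gaussian tail $e^{-c/h}$ and union-bound to get a super-polynomially small $\mathbb P(x_N<0)$; the paper uses a Mill's-ratio estimate that only yields a per-step failure probability $K_2h^2$ and hence an overall failure probability of order $h$ --- weaker, but still sufficient. (iii) Your fourth-order expansion with vanishing odd truncated moments gives a per-step error $O(h^2)$ and hence a total error $O(h)$ after multiplying by $N=T/h$; the paper's discrete It\^o formula only controls the remainder by $h^{3/2}\hat Ku_n^2b_n^2$, so its Lemma on $|G(u)-\mathcal G(u)|$ delivers a multiplicative error $e^{\hat K_1Th^{1/2}}-1=O(h^{1/2})$ rather than $O(h)$; your sharper estimate is in fact what is needed to match the literal $O(h)$ rate in the statement. (iv) Your observation that $u\mapsto\alpha a_nu-\tfrac12\alpha(1-\alpha)b_n^2u^2$ is a concave quadratic maximized at $u_n^*=a_n/[(1-\alpha)b_n^2]$ replaces the paper's explicit gradient and Hessian computation for the full product.

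The one step you should tighten is the telescoping of $\mathbf E[x_N^\alpha\mathbf 1_{\{x_N\ge0\}}]$. The indicator does not factor over the periods, so the identity $\mathbf E[x_{n+1}^\alpha\mid\mathcal F_n]=x_n^\alpha(1+hG_n(u_n)+O(h^2))$ as you state it is only meaningful on a good event and does not immediately iterate. The paper avoids this entirely by working with $\phi(x)=|x|^\alpha$, which is multiplicative, so that $\mathbf E\phi(x_N)=\phi(x_0)\prod_{n}\mathbf E\phi(1+hu_na_n+\sqrt hu_nb_n\xi_{n+1})$ holds exactly by independence, and only at the end compares $\mathbf E\phi(x_N)$ with $\mathbf E\psii(x_N)$ via Cauchy--Schwarz on the complement of the positivity event. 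In your framework the fix is to replace $\mathbf 1_{\{x_N\ge0\}}$ by the indicator of $A_N=\bigcap_{i\le N}\{|\xi_i|\le M\}$, on which all factors are positive and the product telescopes by independence, and then to bound the discrepancy between $\mathbf E[x_N^\alpha\mathbf 1_{\{x_N\ge0\}}]$ and $\mathbf E[x_N^\alpha\mathbf 1_{A_N}]$ by $\left(\mathbf E|x_N|^2\right)^{\alpha/2}\mathbb P(A_N^c)^{1-\alpha/2}$, which your exponential tail estimate makes negligible uniformly in $u\in\mathcal U$. With that repair the argument is complete.
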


We show that the error of this approximation tends to zero as
step size of discretization  $h\to 0$ (which is equivalent that number of mesh points $N\to \infty$).
The proof is heavily dependent  on the application of the variant of the discrete It\^o formula first introduced in \citet{ABR1},
as well as on the fact that  solution $x^*_n$ of \eqref{eq:discrw} for strategy \eqref{def:pin} is positive for all $n=1, \dots, N$ with probability which tends to one when
$h\to 0$ (or $N\to \infty$); see \citet{AGRK}.

In \citet{D07}, it was shown  that the direct discretization of
continuous time optimal Merton's strategies does not approximate the
optimal strategy for the discrete time market if the utility
function
 $\psii(x)=x^\alpha$ is extended as $\psii(x)=-\infty$ for
$x<0$.  We found that this can be overcome
using the functions $U$ with non-concavity that can be made arbitrarily small with selection of a large $L>0$.
Moreover, we show that the probability that this non-concavity will ever have any impact
vanishes as $h\to 0$, since
the probability that the wealth ever
achieves zero vanishes as $h\to 0$.

Let us review the main steps of the proofs.

Let
\begin{equation}
\label{def:phi}
\phi(x)=|x|^\alpha, \quad x\in \mathbf R.
\end{equation} First, we observe that the solution $x_n$ of \eqref{eq:discrw}  can be represented as
\begin{equation}
\label{pres:0}
x_n=x_0\prod_{i=0}^{n-1}\left(1+hu_ia_i+\sqrt{h}u_ib_i\xi_{i+1}\right), \quad x_0>0, \quad n=1, \dots, N.
\end{equation}
Hence
\begin{equation*}
\mathbf E \phi(x_n)=\phi(x_0)\prod_{i=0}^{n-1}\mathbf E \phi(1+hu_ia_i+\sqrt{h}u_ib_i\xi_{i+1}),\quad n=1, \dots, N.
\end{equation*}
Application of the discrete It\^o formula to each $\mathbf E \phi(1+hu_ia_i+\sqrt{h}u_ib_i\xi_{i+1})$ gives that
\[
\sup_u \mathbf E \phi(x_N)=x_0^\alpha\prod_{n=0}^{N-1}\left[1+\alpha h\frac{a^2_n}{2(1-\alpha)b^2_n}\right]+O(h)\quad \hbox{as}\quad h\to 0.
\]
Then we show that the probability
 \[
\mathbf P\{\omega:  \psii(x_N(\omega))\neq \phi(x_N(\omega))\}
\]
can be made arbitrary  small when $N=T/h$ is big enough.
Finally we prove that  $$
\sup_u \mathbf E\left[\psii(x_N)\right]= x_0^\alpha\prod_{n=0}^{N-1}\left[1+\alpha h\frac{a^2_n}{2(1-\alpha)b^2_n}\right]+O(h)\quad \hbox{as}\quad h\to 0.
$$

The remaining part of the paper is devoted to the proof of Theorem \ref{ThM} accordingly to the outline given above.

\renewcommand{\theequation}{A.\arabic{equation}}
\csname @addtoreset\endcsname{equation}{section}
\renewcommand{\thelemma}{A.\arabic{lemma}}
\csname @addtoreset\endcsname{equation}{section}
\renewcommand{\theproposition}{A.\arabic{proposition}}
\renewcommand{\thetheorem}{A.\arabic{theorem}}
\renewcommand{\thesubsection}{A.\arabic{subsection}}
\csname @addtoreset\endcsname{subsection}{section}
\setcounter{section}{1}
\setcounter{equation}{0}
\setcounter{theorem}{0}
\setcounter{lemma}{0}
\setcounter{proposition}{0}
\section*{Appendix: proofs}
\subsection{Discrete It\^o formula.}
\label{sec:ito}

The Discrete It\^o formula  which we use in this paper is similar to the formula first introduced by Appleby, Berkolaiko \& Rodkina \cite{ABR1} for the proof of stability results for scalar stochastic difference equations. The main purpose of  this formula  is to mimic the classical It\^o formula for continues processes when we deal with the discrete process described by the equation with small parameter $h$,  similar \eqref{eq:discrw}.  \citet{BBKR} demonstrates the use of a  discrete It\^o formula in the context of stochastic numerical analysis.
Theorem \ref{thm:DIto} below deals with the case which is  slightly different than the one considered in \citet{ABR1} and \citet{BBKR}.
Theorem \ref{thm:DIto} can also be obtained as a partial case of Lemma 5.1 from \citet{AD}, where the  It\^o formula was proved for the diagonal system of stochastic difference equations. However it is much easier  to give a sketch of the proof  here than to adapt Lemma 5.1 for \eqref{eq:discrw}.

\begin{theorem}
\label{thm:DIto}
Let Assumptions \ref{as:ab}, \ref{as:xi}  and condition \eqref{bound:pi} hold.
Consider $\phi:\mathbf R\to \mathbf R$ such that there exists $\delta\in (0,1)$ and
$\phi_\delta:\mathbf R\to \mathbf R$ saisfying
\begin{enumerate}
\item[(i)] $\phi_\delta$  has a  bounded third derivative on $\mathbf R$,
\item[(ii)] $\phi_\delta(s)=\phi(s)$  for $s\notin (-\delta, \delta)$,
\item[(iii)]$ |\phi_\delta(s)-\phi(s)|<K$ for some $K>0$ and all $s\in (-\delta, \delta)$.
\end{enumerate}
Then there exists $h_0$ such that, for all $h\le h_0$,   $N\ge \frac{T}{h_0}$, and  $n=0, 1,
\dots, N-1$
\begin{equation}
\label{cal:DIF}
\begin{split}
&\mathbf E\left(\phi(1+hu_na_n+\sqrt{h}u_nb_n\xi_{n+1})\right)=\phi(1)+ h\phi'(1)u_na_n+h\frac{\phi''(1)}2u^2_nb^2_n+o(h),
\end{split}
\end{equation}
where
\[
|o(h)|\le h^{3/2}\hat Ku^2_nb^2_n,
\]
and $\hat K>0$ does not depend on $N$.
\end{theorem}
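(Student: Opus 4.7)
The plan is to reduce the identity to a third-order Taylor expansion of the smooth surrogate $\phi_\delta$ around $1$, and then argue that swapping $\phi$ for $\phi_\delta$ costs only an exponentially small error because the random argument $Y_{n+1}:=1+hu_na_n+\sqrt{h}\,u_nb_n\xi_{n+1}$ concentrates near $1$ as $h\to 0$. I will use the uniform two-sided bounds on $|u_n|$ and $|b_n|$ from Assumption \ref{as:ab} and Definition \ref{def:adstrat} in a decisive way, since they are needed to recast generic cubic error terms into the specific form $h^{3/2}\hat K u_n^2 b_n^2$.

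The first step is the tail estimate. Since $|hu_na_n|\le h\hat u\hat a$, for $h$ small enough the event $\{Y_{n+1}\in(-\delta,\delta)\}$ is contained in $\{|\xi_{n+1}|\ge c/\sqrt{h}\}$ for some $c=c(\delta,\hat u,\hat b)>0$, and the Gaussian tail gives $\mathbf P(Y_{n+1}\in(-\delta,\delta))\le 2e^{-c^2/(2h)}$, which is $o(h^p)$ for every $p>0$. Combined with hypothesis (iii),
\[
\bigl|\mathbf E\phi(Y_{n+1})-\mathbf E\phi_\delta(Y_{n+1})\bigr|\le 2K\,e^{-c^2/(2h)},
\]
so this discrepancy is absorbed into the claimed remainder.

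The second step is the Taylor expansion of $\phi_\delta$. By (i) the quantity $M:=\sup_{\mathbb R}|\phi_\delta'''|$ is finite, and by (ii), since $1\notin(-\delta,\delta)$ (I take $\delta<1$), $\phi_\delta$ and its first two derivatives at $1$ equal those of $\phi$. Writing $Z:=Y_{n+1}-1$ and
\[
\phi_\delta(1+Z)=\phi(1)+\phi'(1)Z+\tfrac12\phi''(1)Z^2+R(Z),\qquad |R(Z)|\le\tfrac{M}{6}|Z|^3,
\]
I use $\mathbf E\xi_{n+1}=0$ and $\mathbf E\xi_{n+1}^2=1$ to compute $\mathbf E Z=hu_na_n$ and $\mathbf E Z^2=hu_n^2b_n^2+h^2u_n^2a_n^2$. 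The stray $\tfrac12\phi''(1)h^2u_n^2a_n^2$ is dominated by $h^{3/2}\bigl(\sqrt{h}\,\hat a^2|\phi''(1)|/(2\underline b^2)\bigr)u_n^2b_n^2$, which fits the desired shape once $h\le h_0$. For the cubic, the elementary inequality $(p+q)^3\le 4(p^3+q^3)$ on $[0,\infty)$ yields $\mathbf E|Z|^3\le 4h^3|u_na_n|^3+4h^{3/2}|u_nb_n|^3\mathbf E|\xi_{n+1}|^3$; pulling one factor $|u_nb_n|\le\hat u\hat b$ out of the second term and bounding $|u_na_n|^3$ by $\hat u\hat a\cdot\hat a^2\underline b^{-2}\cdot u_n^2b_n^2$ in the first, each summand is at most a constant times $h^{3/2}u_n^2b_n^2$. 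Assembling the three contributions with Step 1 delivers the claim.

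The main delicacy lies in this last piece of bookkeeping: the target bound has the specific shape $h^{3/2}\hat K u_n^2 b_n^2$ rather than the more natural $h^{3/2}|u_nb_n|^3$, and matching it requires both the lower bound $\underline b\le|b_n|$ and the upper bounds $\hat a,\hat u,\hat b$ in order to trade one factor of a cubic for a factor of $\sqrt{h}$. Beyond this, the argument is essentially the scalar specialization of Lemma 5.1 in \citet{AD}, and no new probabilistic input beyond the third moment of a standard normal is required.
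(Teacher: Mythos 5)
Your proof is correct and follows essentially the same route as the paper: a third-order Taylor expansion of the smooth surrogate $\phi_\delta$ at $1$, with the two-sided bounds $\underline b\le|b_n|\le\hat b$, $\underline u\le|u_n|\le\hat u$, $|a_n|\le\hat a$ used to force every remainder term into the shape $h^{3/2}\hat K u_n^2b_n^2$. The only (harmless) divergence is in controlling $\mathbf E\left|\phi(Y_{n+1})-\phi_\delta(Y_{n+1})\right|$: you combine hypothesis (iii) with the Gaussian tail probability of the event $\{|Y_{n+1}|<\delta\}$, whereas the paper changes variables in the Gaussian integral and bounds $e^{-v^2/2}$ by a power of $v^{-1}$ on that event --- both give an error that is $o(h^{3/2})$ uniformly in $n$.
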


\begin{proof}
Fix $n=0, 1, \dots, N-1$ and define
\[
\UUU_{n+1}:=1+hu_na_n+\sqrt{h}u_nb_n\xi_{n+1}, \quad \nu_{n+1}:=hu_na_n+\sqrt{h}u_nb_n\xi_{n+1},
\]
and, for all $v\in \mathbb R$,
\[
\UU(v):=1+hu_na_n+\sqrt{h}u_nb_nv.
\]
Expand $\phi_\delta(\UUU_{n+1})$ by Taylor's formula and apply mathematical expectation \begin{equation*}
\begin{split}
\mathbf E \phi_\delta(\UUU_{n+1})=\phi_\delta(1)+ \phi'_\delta(1)\mathbf E \nu_{n+1}+\frac{\phi_\delta''(1)}2\mathbf E \nu_{n+1}^2+\mathbf E\left[\frac{\phi_\delta'''(\theta) }{6} \nu_{n+1}^3\right],
\end{split}
\end{equation*}
where $\theta$ is situated  between $1$ and $1+hu_na_n+\sqrt{h}u_nb_n\xi_{n+1}$.
Applying \eqref{bounds:ab} we arrive at the estimate
\begin{equation*}
\begin{split}
\left|\mathbf E\left[\ \frac{\phi_\delta'''(\theta) }{6}\nu_{n+1}^3 \right]  \right|\le & K_1\mathbf E\left|hu_na_n+\sqrt{h}u_nb_n\xi_{n+1}\right|^3\\&\le  K_2|u_n|^3h^{3/2}[ha_n^3+3a_nb_n^2]\le K_3u_n^2b_n^2h^{3/2},
\end{split}
\end{equation*}
where $K_i$, $i=1, 2, 3$,  does not depend on $n$. Note also that
\[
\phi_\delta(1)=\phi(1) , \quad \phi'_\delta(1)=\phi'(1), \quad \phi''_\delta(1)=\phi''(1).
\]
So the only thing which needs to be done  is to estimate
\[
\Delta_2:=\mathbf E\left|\phi(\UUU_{n+1})-\phi_\delta(\UUU_{n+1})\right|=\frac 1{\sqrt{2\pi}}\int_{|(\UU(v)|\le \delta}\left|\phi(\UU(v))-\phi_\delta(\UU(v))\right|e^{-v^2/2}dv.
\]
Change the variables by
\[
s=1+hu_na_n+\sqrt{h}u_nb_nv, \quad v=\frac{s-1-hu_na_n}{\sqrt{h}u_nb_n}, \quad dv=\frac {ds}{\sqrt{h}u_nb_n}.
\]
Assume that $\delta$ and $h_0>0$ are small enough and $|s|\le \delta$,  $h\le h_0$. Then,
for $u_nb_n>0$ we have
\[
 v=\frac{s-1-hu_na_n}{\sqrt{h}u_nb_n}\le \frac{\delta-1-hu_na_n}{\sqrt{h}u_nb_n}\le -\frac 1{2\sqrt{h} u_nb_n} \le-\frac 1{2\sqrt{h} \hat u\hat b},
  \]
  while $u_nb_n<0$ we have
\[
 v =  \frac{1-s-h|u_na_n|}{\sqrt{h}|u_nb_n|}\ge \frac{1-\delta-h|u_na_n|}{\sqrt{h}|u_nb_n|}\ge\frac 1{2\sqrt{h} |u_nb_n|} \ge \frac 1{2\sqrt{h} \hat u\hat b}.
  \]
  So
  \[
  |v|\ge \frac 1{2\sqrt{h}|u_n b_n|},
  \]
 which implies that
 \[
 e^{-v^2/2}\le  K_4 v^{-4}\le K_5h^2u_n^4 b_n^4.
 \]
 Note that $h_0>0$ chosen here does not depend on $n$, but only on bounds for $a$, $b$, $u$,  i.e. on $\hat a, \hat b, \hat u, \underline a, \underline b, \underline u>0$ (see  \eqref{bounds:ab}, \eqref{bound:pi}).

 This gives us
\begin{equation*}
\begin{split}
\Delta_2=&\frac 1{\sqrt{2\pi}}\int_{|(\UU(v)|\le \delta}\left|\phi(\UU(v))-\phi_\delta(\UU(v))\right|e^{-v^2/2}dv\\&\le\frac {K_5h^2u_n^4b_n^4}{\sqrt{2\pi}}\int_{|(\UU(v)|\le \delta}\left|\phi(\UU(v))-\phi_\delta(\UU(v))\right|dv\\&=\frac {K_5h^2u_n^4b_n^4}{\sqrt{2\pi}\sqrt{h}u_nb_n}\int_{|s|\le \delta}\left|\phi(s)-\phi_\delta(s)\right|ds\le K_6h^{3/2}u_n^2b_n^2,
\end{split}
\end{equation*}
where $K_6$ does not depend on $n$, which  completes the proof.

\end{proof}

\begin{corollary}
\label{cor:alpha}
For $\phi$, defined by \eqref{def:phi}, formula \eqref{cal:DIF} takes the form
\begin{equation}
\label{cal:DIFalpha}
\begin{split}
&\mathbf E\left(\phi(1+hu_na_n+\sqrt{h}u_nb_n\xi_{n+1})\right)\\
&=1+ h\alpha u_na_n+h\frac{\alpha(\alpha-1)}2u^2_nb^2_n+o(h),
\end{split}
\end{equation}
where
\[
|o(h)|\le h^{3/2}\hat Ku^2_nb^2_n,
\]
and $\hat K>0$ does not depend on $n=0, 1, \dots, N-1$. Hence  \eqref{cal:DIFalpha} can be written as
\begin{equation}
\label{cal:DIFalpha1}
\begin{split}
&\mathbf E\left(\phi(1+hu_na_n+\sqrt{h}u_nb_n\xi_{n+1})\right)\\
&=1+ h\alpha u_na_n+h\frac{\alpha(\alpha-1)}2u^2_nb^2_n[1+h^{1/2}O_n(1)],
\end{split}
\end{equation}\uu
where $|O_n(1)|\le \hat K$ for all $n=0, 1, \dots, N-1$, $N>\frac T{h_0}$ and $h\le h_0$.
\end{corollary}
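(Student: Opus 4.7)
The plan is to apply Theorem \ref{thm:DIto} directly to $\phi(x) = |x|^\alpha$ after exhibiting an admissible mollification $\phi_\delta$. Since the Taylor expansion in Theorem \ref{thm:DIto} is centered at $x=1$ and $\phi$ is smooth on a neighborhood of $1$, the values used in \eqref{cal:DIF} come from the genuine $\phi$:
\[
\phi(1) = 1, \qquad \phi'(1) = \alpha, \qquad \phi''(1) = \alpha(\alpha-1).
\]
The only non-smoothness of $\phi$ is the cusp at $x=0$, which is what the mollification hypothesis in Theorem \ref{thm:DIto} is designed to handle.

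First I would fix some convenient $\delta \in (0, 1/2)$ and construct $\phi_\delta : \mathbf R \to \mathbf R$ by setting $\phi_\delta(x) = |x|^\alpha$ for $|x| \ge \delta$ and, on $[-\delta, \delta]$, by a smooth (e.g.\ piecewise-polynomial or spline) interpolant matching $\phi$ and its first three derivatives at the endpoints $x = \pm \delta$. Because $\delta$ is a fixed positive constant, the quantities $|x|^\alpha, \alpha|x|^{\alpha-1}\mathrm{sgn}(x), \dots$ are all finite at $x = \pm \delta$, so the interpolant has a bounded third derivative on $[-\delta,\delta]$ and hence on all of $\mathbf R$; this yields (i). Condition (ii) holds by construction, and (iii) follows because both $\phi$ and $\phi_\delta$ are bounded on $[-\delta, \delta]$ (one may take $K := 2\delta^\alpha$).

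With this $\phi_\delta$ in hand, Theorem \ref{thm:DIto} immediately produces \eqref{cal:DIFalpha} after substituting the derivative values above into \eqref{cal:DIF}, with the same remainder estimate $|o(h)| \le h^{3/2}\hat K u_n^2 b_n^2$. To recast the result in the form \eqref{cal:DIFalpha1}, I would factor $h\tfrac{\alpha(\alpha-1)}{2} u_n^2 b_n^2$ out of the last two terms and define
\[
O_n(1) := \frac{2\,o(h)}{h^{3/2}\alpha(\alpha-1) u_n^2 b_n^2},
\]
so that $|O_n(1)| \le \frac{2\hat K}{|\alpha(\alpha-1)|}$ uniformly in $n$. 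Absorbing this constant multiplier into a redefined $\hat K$ delivers the stated uniform bound $|O_n(1)| \le \hat K$ valid for all $n = 0,1,\dots,N-1$, $h \le h_0$, and $N > T/h_0$.

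The only step that is not pure bookkeeping is the construction of $\phi_\delta$; but since $\delta$ is chosen once and for all (independent of $h$) and $|x|^\alpha$ is $C^\infty$ away from the origin, this is a standard smoothing that poses no genuine obstacle, and the resulting $K$ enters only Theorem \ref{thm:DIto}'s constants, not the form of the expansion.
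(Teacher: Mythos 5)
Your proposal is correct and follows exactly the route the paper intends: the corollary is an immediate application of Theorem \ref{thm:DIto} to $\phi(x)=|x|^\alpha$, and you supply precisely the missing details --- a valid mollification $\phi_\delta$ near the cusp at the origin (noting $|x|^\alpha$ already has a bounded third derivative for $|x|\ge\delta$), the derivative values $\phi(1)=1$, $\phi'(1)=\alpha$, $\phi''(1)=\alpha(\alpha-1)$, and the factorization of the remainder using $|u_n b_n|\ge \underline{u}\,\underline{b}>0$ to define $O_n(1)$. Nothing further is needed.
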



\subsection{Positivity of $x_n$ with probability close to 1}
\label{sec:pos}

In this section we follow ideas from \citet{AGRK}, showing that  even though a.s. positivity is impossible to achieve for solution of \eqref{eq:discrw}, positivity with arbitrarily high probability is observed as the number of mesh points $N$  grows large. Again, we are giving the sketch of the proof instead of adapting  a result from  \citet{AGRK}.

Let  $x_n$ be a solution to \eqref{eq:discrw} with a positive initial value $x_0>0$, a parameter $h=\frac TN$ and  a strategy $\uu$.
Define
\begin{equation}
\label{def:OmegaN}
\Omega_N:=\mathbb P\{\omega\in \Omega: x_n(\omega)> 0, \quad  n=1, \dots, N\}.
\end{equation}

\begin{lemma}
\label{lem:pos}
Let Assumptions \ref{as:ab}, \ref{as:xi}  and condition \eqref{bound:pi} hold.
Let $\Omega_N$ be defined as in \eqref{def:OmegaN}. Then,  for each $\gamma\in (0, 1)$, we can find $N(\gamma)$ such that for all $N\ge N(\gamma)$
\[
\mathbb P[\Omega_N]\ge 1-\gamma.
\]
\end{lemma}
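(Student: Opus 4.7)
The plan is to reduce the positivity event to one ``bad'' event per time step and bound each one via a Gaussian tail estimate, then apply the union bound over the $N = T/h$ steps.

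First I would observe that, since $x_0 > 0$ and $x_{n+1} = x_n\bigl(1 + h u_n a_n + \sqrt{h}\, u_n b_n \xi_{n+1}\bigr)$, positivity of $x_1,\dots,x_N$ is equivalent to positivity of each factor $\zeta_{k+1} := 1 + h u_k a_k + \sqrt{h}\,u_k b_k \xi_{k+1}$. Hence
$$\Omega_N^c \subseteq \bigcup_{k=0}^{N-1}\{\zeta_{k+1} \le 0\}.$$
Next, by Assumption \ref{as:ab} and \eqref{bound:pi}, we have $|h u_k a_k| \le h\hat u \hat a \le \tfrac{1}{2}$ for all $h \le h_0 := 1/(2\hat u\hat a)$, uniformly in $k$. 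On $\{\zeta_{k+1}\le 0\}$ this forces
$$\sqrt{h}\, u_k b_k \xi_{k+1} \le -1 - hu_k a_k \le -\tfrac{1}{2},$$
so $|\xi_{k+1}| \ge 1/(2\sqrt{h}\,|u_k b_k|) \ge 1/(2\sqrt{h}\,\hat u\hat b) =: M_h$.

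The core of the proof is then a tail bound for $\mathcal N(0,1)$: using Chebyshev's inequality with the fourth moment,
$$\mathbb P\bigl(|\xi_{k+1}| \ge M_h\bigr) \le \frac{\mathbf E[\xi_{k+1}^4]}{M_h^4} = \frac{3}{M_h^4} = 48(\hat u\hat b)^4\, h^2.$$
(The Gaussian bound $\mathbb P(|\xi|\ge M)\le 2e^{-M^2/2}$ would give even sharper decay, but the polynomial bound already suffices.) Combining this with the union bound and $N = T/h$:
$$\mathbb P[\Omega_N^c] \le \sum_{k=0}^{N-1} \mathbb P\bigl(|\xi_{k+1}| \ge M_h\bigr) \le 48(\hat u\hat b)^4\, Nh^2 = 48(\hat u\hat b)^4\, T\, h,$$
which tends to $0$ as $h \to 0$. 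Given $\gamma \in (0,1)$, it then suffices to choose $N(\gamma)$ so that $48(\hat u\hat b)^4 T / N(\gamma) \le \gamma$.

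The crux is that the per-step bad probability must decay strictly faster than $h$, since we are summing $N\sim 1/h$ such terms. The Gaussian tail of $\xi_{k+1}$ supplies this margin comfortably: the threshold $M_h$ grows like $h^{-1/2}$, so even a crude polynomial tail estimate yields the required $O(h^2)$ per step. No further subtlety is expected; one only needs uniformity in $k$, which is guaranteed by the uniform bounds $\hat a,\hat b,\hat u$ from Assumption \ref{as:ab} and Definition \ref{def:adstrat}.
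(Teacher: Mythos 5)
Your proof is correct and follows essentially the same route as the paper's: both reduce positivity of the whole trajectory to positivity of each independent factor $1+hu_ka_k+\sqrt{h}\,u_kb_k\xi_{k+1}$, and both show the per-step failure probability is $O(h^2)$ by a Gaussian tail bound at the threshold $\tfrac{1}{2\sqrt{h}\,\hat u\hat b}$, which beats the factor $N=T/h$ coming from aggregation. The only cosmetic differences are that you aggregate by a union bound where the paper takes the product of the per-step probabilities, and you use a fourth-moment Chebyshev bound where the paper uses Mill's ratio (and note the harmless slip that $Nh^2=T h=T^2/N$, so your final choice of $N(\gamma)$ should carry $T^2$ rather than $T$).
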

\begin{proof}
Note that  $x_n$ is $\mathcal F_n$-measurable and is independent of $\xi_{n+1}$. Let $u_nb_n>0$. Then, for $n=0, 1, \dots, N-1$,  we have
\begin{equation}
\begin{split}
\label{calc:pos1}
&\mathbb P\{x_{n+1}> 0\bigl |x_n>0\}=\mathbb P\left\{x_n\left(1+hu_na_n+\sqrt{h}u_nb_n\xi_{n+1}\right)> 0\bigl |x_n>0\right\}\\&=\mathbb P\left\{1+hu_na_n+\sqrt{h}u_nb_n\xi_{n+1}> 0\bigl |x_n>0\right\}=
\mathbb P\left\{\xi_{n+1}> -\frac{1+hu_na_n}{\sqrt{h}u_nb_n}\biggl |x_n>0\right\}\\&=1-\Phi\left(-\frac{1+hu_na_n}{\sqrt{h}u_nb_n}\right)=\Phi\left(\frac{1+hu_na_n}{\sqrt{h}u_nb_n}\right),
\end{split}
\end{equation}
where $\Phi$ is a normal probability distribution function.

If $u_nb_n<0$ we consider $\bar \xi_{n+1}=-\xi_{n+1}$ and note that $\bar \xi_{n+1}$ is also standard normal variable. So calculations \eqref{calc:pos1} holds true in this case again.

Applying  \eqref{calc:pos1},  Mill's  estimate (see \citet{KarShr})
\begin{equation*}
\frac{x}{(1+x^2)\sqrt{2\pi}}e^{-x^2/2}\le 1-\Phi(x)\le \frac1{x\sqrt{2\pi}}e^{-x^2/2}, \quad x>0,
\end{equation*}
and the inequality
\[
\frac{1+hu_na_n}{\sqrt{h}u_nb_n}\ge \frac{1}{2\sqrt{h}\hat u\hat b},
\]
we conclude that for some $h_1>0$ and all $h<h_1$, we have
\begin{equation*}
\begin{split}
&\mathbb P\{x_{n+1}> 0\bigl |x_n>0\}\ge \Phi\left(\frac{1+hu_na_n}{\sqrt{h}u_nb_n}\right)\ge 1-\frac{1}{\sqrt{2\pi}}\frac{e^{-\frac 12 \left(\frac{1+hu_na_n}{\sqrt{h}u_nb_n}\right)^2}}{
\frac{1+hu_na_n}{\sqrt{h}u_nb_n}}\\&
\ge  1-K_1\frac{\left(\frac1{2\sqrt{h}\hat u\hat  b} \right)^{-3}}{\frac1{2\sqrt{h}\hat u\hat b}}=
1-K_1\left({2\sqrt{h}\hat u\hat b} \right)^4=1-K_2h^2,
\end{split}
\end{equation*}
where $K_1, K_2>0$ do not depend on $n$. Then,
\begin{equation*}
\begin{split}
\mathbb P[\Omega_N]:=&\prod_{n=0, 1, \dots N-1}\mathbb P\{x_{n+1}> 0\bigl |x_n>0\}
\ge
\prod_{n=0, 1, \dots N-1}\left(1-K_2h^2\right)\\=
&\left(1-K_2h^2\right)^N=\left(1-\frac{K_2T^2}{N^2}\right)^N.
 \end{split}
\end{equation*}
Fix now $\gamma\in (0, 1)$ and find $N(\gamma)$ such that for all $N\ge N(\gamma)$
\[
1-\left(1-\frac{K_2T^2}{N^2}\right)^N<\gamma.
\]
This implies that  for all $N\ge N(\gamma)$
\[
\mathbb P[\Omega_N]\ge 1-\gamma,
\]
which completes the proof.
\end{proof}


\subsection{Estimation of maximum $\mathbf E\phi(x_N)$. }
\label{sec:max}

Let $h_0$, $O_n(1)$ and $\hat K$ be from Corollary \ref{cor:alpha}. So formula \eqref{cal:DIFalpha1} holds and  $|O_n(1)|<\hat K$ for all $n=1, \dots, N-1$,
$N>N_0=\frac T{h_0}$.
In addition we assume that $h_0$ is so small that  for $h\le h_0$, $n=1, \dots, N-1$, $N>N_0$,
\begin{equation}
\label{ineq:120}
1-h^{1/2}\hat K>0, \quad 1+h\alpha a_nu_n+h\frac{\alpha(\alpha-1)}2u_n^2b^2_n[1-\hat Kh^{1/2}]>0.
\end{equation}
Define, for $h\le h_0$,  $N>N_0$ and for any admissible strategy $\uu$,
\begin{equation}
\label{calc:prodopt}
\begin{split}
G(\uu):=\mathbf E\phi(x_N)&=\phi(x_0)\prod_{n=0}^{N-1}\mathbf E\left(\phi(1+hu_na_n+\sqrt{h}u_nb_n\xi_{n+1})\right)\\&=
x_0^\alpha\prod_{n=0}^{N-1}\left[1+\alpha h u_n a_n+h\frac{\alpha(\alpha-1)}2u_n^2b^2_n[1+h^{1/2}O_n(1)] \right],
\end{split}
\end{equation}
and
\begin{equation}
\label{def:mathcalG}
\hspace{-5cm}\mathcal {G}(\uu):=x_0^\alpha\prod_{n=0}^{N-1}\left[1+\alpha h u_n a_n+h\frac{\alpha(\alpha-1)}2u_n^2b^2_n\right].
\end{equation}


\subsubsection{Calculation of strategy  to maximize $\mathcal {G}(\uu) $.   }
\label{subsec:calc}

\begin{lemma}
\label{lem:maxphi}
Let Assumptions \ref{as:ab} and condition \ref{bound:pi}  hold. Let $\mathcal G$ be defined as in \eqref{def:mathcalG}. Then
the strategy  $\uu^*$,  defined by \eqref{def:pin}, maximizes  $\mathcal {G}(\uu)$.
\end{lemma}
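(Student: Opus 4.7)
The plan is to exploit the product structure of $\mathcal{G}(\uu)$ and the fact that the $n$-th factor depends only on the single control variable $u_n$, so the maximization decouples across time steps. Under the bounds in \eqref{ineq:120}, each factor
\[
f_n(u) := 1 + \alpha h u a_n + h\frac{\alpha(\alpha-1)}{2}u^2 b_n^2
\]
is strictly positive for $h\le h_0$, so maximizing the product $\prod_{n=0}^{N-1} f_n(u_n)$ over admissible $\uu$ is equivalent to maximizing $f_n(u_n)$ separately for each $n$, subject to $\underline u \le |u_n| \le \hat u$.

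Next, I would observe that $f_n(u)$ is a quadratic in $u$ with leading coefficient $h\alpha(\alpha-1)b_n^2/2$, which is strictly negative because $\alpha\in(0,1)$ and $b_n\ne 0$ by Assumption~\ref{as:ab}. Hence $f_n$ is strictly concave in $u$ and has a unique unconstrained maximizer obtained from $f_n'(u)=0$:
\[
\alpha h a_n + h\alpha(\alpha-1) u\, b_n^2 = 0 \quad\Longleftrightarrow\quad u = \frac{a_n}{(1-\alpha)b_n^2},
\]
which is precisely $u_n^*$ as defined in \eqref{def:pin}.

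Finally, I would note that the bounds $\underline a, \hat a, \underline b, \hat b$ in \eqref{bounds:ab} imply $|u_n^*|$ is bounded above by $\hat a/[(1-\alpha)\underline b^2]$ and below by $\underline a/[(1-\alpha)\hat b^2]$, so the admissibility constants $\underline u, \hat u$ in Definition~\ref{def:adstrat} can be (and are implicitly) chosen so that the unconstrained maximizer $u_n^*$ lies in the admissible range; then the constrained maximizer on $\{u : \underline u \le |u|\le \hat u\}$ coincides with $u_n^*$. Combining these, $\mathcal{G}(\uu)\le \mathcal{G}(\uu^*)$ for every admissible $\uu$.

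I do not expect any real obstacle here: the lemma reduces to an elementary one-variable concave quadratic optimization repeated $N$ times. The only subtlety worth a sentence in the proof is the separability argument, namely that positivity of the factors $f_n(u_n)$ (guaranteed by \eqref{ineq:120}) is what licenses replacing the maximization of a product by coordinatewise maximization of its factors.
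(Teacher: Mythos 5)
Your proof is correct, but it certifies optimality by a different mechanism than the paper does. The paper treats $\mathcal G$ as a function of the full vector $(u_0,\dots,u_{N-1})$, solves the system $\partial\mathcal G/\partial u_k=0$ (which reduces, exactly as in your computation, to $a_k+(\alpha-1)u_kb_k^2=0$), and then verifies that the Hessian at $\uu^*$ is negative definite --- the off-diagonal second derivatives vanish at $\uu^*$ precisely because they carry the first-order factors $a_k+(\alpha-1)u_k^*b_k^2=0$, so the quadratic form diagonalizes. Your route instead uses the separability of the product into factors $f_n(u_n)$, each a strictly concave quadratic in its own variable, and maximizes coordinatewise. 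This buys you something the paper's argument does not literally deliver: a negative definite Hessian at a critical point only certifies a \emph{local} maximum, whereas your decomposition gives $f_n(u_n)\le f_n(u_n^*)$ for every admissible $u_n$ and hence global optimality of $\uu^*$ over all of $\mathcal U$ in one step (the positivity of the factors, which you correctly flag as the one point needing care, is what lets you multiply these inequalities; it holds for small $h$ since all correction terms are $O(h)$ with coefficients bounded via \eqref{bounds:ab} and \eqref{bound:pi}). Your closing remark on admissibility of $u_n^*$ matches the paper's unproved assertion after \eqref{def:pin} that condition \eqref{bound:pi} holds for this strategy, and your explicit bounds $\underline a/[(1-\alpha)\hat b^2]\le|u_n^*|\le\hat a/[(1-\alpha)\underline b^2]$ make that assertion precise. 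In short: same first-order condition, cleaner and strictly stronger optimality certificate.
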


\begin{proof}
To find the maximum of $\mathcal G$ we calculate its partial derivatives. We have
\begin{equation*}
\begin{split}
\frac{\partial \mathcal G}{\partial u_k}=&x_0^2\alpha h\left(a_k+(\alpha-1)u_kb^2_k\right)\prod_{n=0,
n\neq k}^{N-1}\left[1+h\alpha a_nu_n+h\frac{\alpha(\alpha-1)}2u_n^2b^2_n\right].
\end{split}
\end{equation*}
By \eqref{ineq:120}, solving the system
\begin{equation*}
\frac{\partial \mathcal G}{\partial u_k}=0, \quad  k=0, 1, \dots, N-1,
\end{equation*}
is equivalent to solving the system
\begin{equation}
\label{syst:max1}
a_k+(\alpha-1)u_kb^2_k  =0, \quad  k=0, 1, \dots, N-1.
\end{equation}
Solution $\uu ^*$ to \eqref{syst:max1} is given by \eqref{def:pin}. To show that
$\uu ^*$ is a point of maximum for the function $\mathcal {G}$, we find second partial derivatives of $\mathcal {G}$ at $\uu^*$. We have, for  $ k=0, \dots, N-1,$
\begin{equation*}
\begin{split}
\frac{\partial^2 \mathcal G}{\partial u_k^2}=&x_0^2\alpha h (\alpha-1)b^2_n\prod_{n=0, n\neq k}^{N-1}\left[1+\alpha h a_nu_n+h\frac{\alpha(\alpha-1)}2u_n^2b^2_n\right],
\end{split}
\end{equation*}
and, for $k\neq j$,
\begin{equation*}
\begin{split}
\frac{\partial^2 \mathcal G}{\partial u_k\partial u_j}=&x_0^2\alpha^2 h^2\left(a_k+(\alpha-1)u_kb^2_k\right)\left(a_j+(\alpha-1)u_jb^2_j\right)\\&\hspace{3cm}\times\prod_{n=0, n\neq k, j}^{N-1}\left[1+\alpha h  a_nu_n+h\frac{\alpha(\alpha-1)}2u_n^2b^2_n\right].
\end{split}
\end{equation*}
Let $y=(y_0, \dots, y_{N-1})$. Consider  the following quadratic form
\begin{equation}
\label{quadr:1}
\begin{split}
Q(y)=\sum_{k, j=0}^{N-1}\frac{\partial^2 \mathcal G}{\partial u_k\partial u_j}\biggl|_{\uu=\uu^*}y_ky_j.
\end{split}
\end{equation}
Since
\[
\left(a_k+(\alpha-1)u^*_kb^2_k\right)\left(a_j+(\alpha-1)u^*_jb^2_j\right)=0,
\]
we have,  for $k\neq j$,
\[
\frac{\partial^2 \mathcal G}{\partial u_k\partial u_j}\biggl|_{\uu=\uu^*}=0,
\]
and \eqref{quadr:1} takes the  form
\begin{equation*}
\begin{split}
&Q(y)=\sum_{k=0}^{N-1}\frac{\partial^2 G}{\partial u_k^2}\biggl|_{\uu=\uu^*}y_k^2
=x_0^2\alpha h(\alpha-1)\sum_{k}^Nb^2_k\prod_{n=0, n\neq k}^{N-1}\left[1+\frac{\alpha ha_n^2}{2(1-\alpha)b^2_n}\right]y_k^2.
\end{split}
\end{equation*}
Since $\alpha-1<0$, but
\[
\alpha h>0, \quad b^2_k>0, \quad 1+\frac{\alpha ha_n^2}{2(1-\alpha)b^2_n}>0,
\]
the quadratic form $Q(y)$ is negatively defined, which proves that $\uu^*$ given  by
\eqref{def:pin}  is a point of maximum for $\mathcal G$.

\end{proof}

\subsubsection{Estimation of the difference $G(\uu)-\mathcal G(\uu)$.}
\label{subsec:estdif}

\begin{lemma}
\label{lem:difGG}
Let Assumptions \ref{as:ab}, \ref{as:xi}  and condition \eqref{bound:pi} hold.
Let  $\mathcal{G}(\uu)$ and $G(\uu)$ be defined as in \eqref{def:mathcalG} and  \eqref{calc:prodopt}, respectively.
Then, for each $\varepsilon>0$ there exists $N(\varepsilon)\in \mathbf N$ such that for all $N>N(\varepsilon)$,
$h\le \frac T{N(\varepsilon)}$, we have
\begin{equation}
\label{ineqGG*}
|G(\uu)-\mathcal{G}(\uu)|\le \varepsilon.
\end{equation}
\end{lemma}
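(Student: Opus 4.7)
The plan is to compare the two products factor-by-factor, using Corollary~\ref{cor:alpha} to control the discrepancy in each factor, and then to turn that factor-wise error into a product-wise error via a telescoping identity. The key observation is that although there are $N=T/h$ factors, each is of the form $1+O(h)$, so the products stay bounded uniformly in $N$ and the factor-wise discrepancy of size $O(h^{3/2})$ accumulates only to $O(\sqrt h)$, which is independent of $\uu$.

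First I would write
\[
G(\uu)=x_0^\alpha\prod_{n=0}^{N-1}\tilde f_n,\qquad \mathcal G(\uu)=x_0^\alpha\prod_{n=0}^{N-1} f_n,
\]
where $f_n:=1+\alpha h u_n a_n+h\frac{\alpha(\alpha-1)}{2}u_n^2b_n^2$ and $\tilde f_n=f_n+\rho_n$ with
\[
\rho_n:=h\tfrac{\alpha(\alpha-1)}{2}u_n^2 b_n^2\cdot h^{1/2}O_n(1).
\]
By Corollary~\ref{cor:alpha}, together with Assumption~\ref{as:ab} and \eqref{bound:pi}, we have $|\rho_n|\le C_1 h^{3/2}$, where the constant $C_1$ depends only on $\alpha,\hat a,\hat b,\hat u,\hat K$ and is uniform in $n$, $N$, and $\uu$. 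Moreover, the same bounds give $|f_n|,|\tilde f_n|\le 1+C_2 h$ for another such constant $C_2$, and by \eqref{ineq:120} both $f_n$ and $\tilde f_n$ are positive as soon as $h\le h_0$.

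Second I would apply the standard telescoping identity
\[
\prod_{n=0}^{N-1}\tilde f_n-\prod_{n=0}^{N-1} f_n = \sum_{k=0}^{N-1}\Bigl(\prod_{n<k}\tilde f_n\Bigr)\rho_k\Bigl(\prod_{n>k} f_n\Bigr),
\]
combined with the crude product estimate
\[
\prod_{n=0}^{N-1}|f_n|,\ \prod_{n=0}^{N-1}|\tilde f_n|\ \le\ (1+C_2 h)^N\ \le\ e^{C_2 h N}=e^{C_2 T},
\]
which uses crucially the cancellation $hN=T$. This yields
\[
|G(\uu)-\mathcal G(\uu)|\le x_0^\alpha\cdot N\cdot C_1 h^{3/2}\cdot e^{C_2 T}=x_0^\alpha C_1 e^{C_2 T}\, T\cdot h^{1/2}.
\]
Since the right-hand side depends only on $x_0,T,\alpha,\hat a,\hat b,\hat u,\hat K$ and tends to zero as $h\to 0$, for any $\varepsilon>0$ I pick $N(\varepsilon)$ so that $x_0^\alpha C_1 e^{C_2 T}T\sqrt{T/N(\varepsilon)}\le\varepsilon$, which gives \eqref{ineqGG*} uniformly in $\uu\in\mathcal U(N)$ for all $N\ge N(\varepsilon)$.

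The main subtle point is the tension in the second step: each single factor differs from $1$ by only $O(h)$, but there are $N=T/h$ of them, so a naive bound on the product would diverge with $N$. The identity $hN=T$ in the exponent of $(1+C_2 h)^N\le e^{C_2 T}$ is exactly what keeps the total product bounded, and combined with $|\rho_n|=O(h^{3/2})$ this is what forces the global error to shrink like $\sqrt h$ rather than grow with $N$. Everything else is routine; the uniformity in $\uu$ is built into each estimate because the constants depend only on the bounds in \eqref{bounds:ab} and \eqref{bound:pi}, not on the specific admissible strategy.
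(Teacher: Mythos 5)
Your proof is correct, and it reaches the same quantitative conclusion as the paper --- a per-factor error of order $h^{3/2}$ accumulating over $N=T/h$ factors to a global error of order $h^{1/2}$, uniformly in $\uu$ --- but by a different mechanism. You perturb additively: writing each factor of $G(\uu)$ as $f_n+\rho_n$ with $|\rho_n|\le C_1h^{3/2}$ and using the telescoping identity for a difference of products, you only need the crude upper bound $(1+C_2h)^{N}\le e^{C_2T}$ on the partial products. The paper instead perturbs multiplicatively: in \eqref{est:right1}--\eqref{est:left1} it sandwiches each perturbed factor between $\nu_ne^{-h^{3/2}\hat K_2}$ and $\nu_ne^{h^{3/2}\hat K_1}$, which after taking products gives $\mathcal G(\uu)e^{-h^{1/2}\hat K_2T}\le G(\uu)\le \mathcal G(\uu)e^{h^{1/2}\hat K_1T}$, and then invokes the separate uniform bound $\mathcal G(\uu)\le x_0^\alpha C_1$ of \eqref{est:rmathcalG}. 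The multiplicative route requires dividing by $\nu_n$ and hence a lower bound keeping $\nu_n$ away from zero (this is what \eqref{cond:h02} and the quantity $\eta_n$ are for); your additive route avoids that ratio entirely and is in this respect slightly cleaner, at the cost of stating the telescoping identity explicitly. One small caveat common to both arguments: the positivity of the factors for small $h$ does not literally follow from \eqref{ineq:120} as written (since $\alpha(\alpha-1)<0$, the bracket $[1-\hat Kh^{1/2}]$ is the favourable rather than the worst case), but since every factor is $1+O(h)$ with constants depending only on \eqref{bounds:ab} and \eqref{bound:pi}, positivity for $h\le h_0$ is immediate anyway, so this does not affect the validity of your proof.
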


\begin{proof}
Denote
\begin{equation}
\label{def:3def}
\nu_n:=1+\alpha h u_n a_n+h\frac{\alpha(\alpha-1)}2u_n^2b^2_n, \quad \eta_n:=\frac{\alpha(\alpha-1)}2\frac{u_n^2b^2_n}{\nu_n}.
\end{equation}

 Let $h_0$ and $\hat K$ be as in Corollary \ref{cor:alpha} .
Assume in addition  that $h_0$ is so small  that
\begin{equation}
\label{cond:h02}
1+\alpha h u_n a_n+h\frac{\alpha(\alpha-1)}2u_n^2b^2_n[1-h^{1/2}\hat K]>\frac 12.
\end{equation}
Based on Assumption \ref{as:ab}, inequality \eqref{bound:pi} and \eqref{cond:h02}, we conclude that
there exist constants $\hat K_1>0$ and $\hat K_2>0$ which do not depend on $N$($h_0$) such that, for all $n=1, 2, \dots, N-1$,
\begin{equation}
\label{est:K12}
\hat K\eta_n\le \hat K_1, \quad \hat K\alpha(\alpha-1)u_n^2b^2_n\le \hat K_2.
\end{equation}
Note that $|O_n(1)|\le \hat K$ . Then, applying \eqref{cond:h02} and \eqref{est:K12} we have, for all $n=1, 2, \dots, N-1$,
\begin{equation}
\label{est:right1}
\begin{split}
&1+\alpha h u_n a_n+h\frac{\alpha(\alpha-1)}2u_n^2b^2_n[1+h^{1/2}O_n(1)]\le \nu_n+h^{3/2}\hat K\frac{\alpha(\alpha-1)}2u_n^2b^2_n=\\&\\
&\nu_n\left[ 1+  h^{3/2}\hat K \eta_n\right]\le \nu_ne^{ h^{3/2}\hat K \eta_n}\le
\nu_ne^{ h^{3/2}\hat K_1},
\end{split}
\end{equation}
 and
\begin{equation}
\label{est:left1}
\begin{split}
&1+\alpha h u_n a_n+h\frac{\alpha(\alpha-1)}2u_n^2b^2_n[1+h^{1/2}O_n(1)]\ge \nu_n-h^{3/2}\hat K\frac{\alpha(\alpha-1)}2u_n^2b^2_n=\\&\\
&\nu_n\left[\frac{\nu_n-h^{3/2}\hat K\frac{\alpha(\alpha-1)}2u_n^2b^2_n}{\nu_n}\right]
=\nu_n\left[\frac{\nu_n}{\nu_n-h^{3/2}\hat K\frac{\alpha(\alpha-1)}2u_n^2b^2_n}\right]^{-1}=\\&\\
&\nu_n\left[1+\frac{h^{3/2}\hat K\frac{\alpha(\alpha-1)}2u_n^2b^2_n}{\nu_n-h^{3/2}\hat K\frac{\alpha(\alpha-1)}2u_n^2b^2_n}\right]^{-1}\ge
 \nu_n
 \exp\left\{-\frac{h^{3/2}\hat K\frac{\alpha(\alpha-1)}2u_n^2b^2_n}{\nu_n-h^{3/2}\hat K\frac{\alpha(\alpha-1)}2u_n^2b^2_n} \right\}\ge \\&\\
 &\nu_n \exp\left\{-h^{3/2}\hat K\alpha(\alpha-1)u_n^2b^2_n\right\}\ge\nu_ne^{ -h^{3/2}\hat K_2}.
\end{split}
\end{equation}
Note that
\begin{equation}
\label{note:1}
\mathcal G(\uu)=x_0^\alpha\prod_{n=0}^{N-1}\nu_n, \quad h^{3/2}N=Th^{1/2}, \quad hN=T.
\end{equation}
Applying \eqref{est:left1}, \eqref{est:right1}  and \eqref{note:1} we arrive at
\begin{equation*}
\label{est:rightleft1}
\begin{split}
\mathcal G(\uu)e^{-h^{3/2}\hat K_2N}\le G(\uu)\le \mathcal G(\uu)e^{h^{3/2}\hat K_1N},
\end{split}
\end{equation*}
or
\begin{equation*}
\label{est:rightleft2}
\begin{split}
\mathcal G(\uu)e^{-h^{1/2}\hat K_2T}\le G(\uu)\le \mathcal G(\uu)e^{h^{1/2}\hat K_1T},
\end{split}
\end{equation*}
which implies that
\begin{equation}
\label{est:rightleft3}
\begin{split}
\mathcal G(\uu)\left[e^{-h^{1/2}\hat K_2T}-1\right]\le G(\uu)-\mathcal G(\uu)\le \mathcal G(\uu)\left[e^{h^{1/2}\hat K_1T}-1\right].
\end{split}
\end{equation}
Therefore,
\[
| G(\uu)-\mathcal G(\uu) |\le  G(\uu)\max\left\{1-e^{-h^{1/2}\hat K_2T}, \,\,   e^{h^{1/2}\hat K_1T}-1\right\}.
\]
Now we estimate $G(\uu)$. By Assumption \ref{as:ab} and inequality \eqref{bound:pi}
we have
\begin{equation}
\label{est:rmathcalG}
\begin{split}
\mathcal G(\uu)=&x_0^\alpha\prod_{n=0}^{N-1}\left[1+\alpha h u_n a_n-h\frac{\alpha(1-\alpha)}2u_n^2b^2_n\right]\le \\&x_0^\alpha\exp\left\{\alpha h\sum_{n=0}^{N-1}\left[u_n a_n-\frac{(1-\alpha)}2u_n^2b^2_n\right]\right\}\le \\&
x_0^\alpha\exp\left\{\alpha h N\left[\hat a\hat u-\frac{(1-\alpha)}2\underline u^2\underline b^2_n\right]\right\}=\\&
x_0^\alpha\exp\left\{\alpha T\left[\hat a\hat u-\frac{(1-\alpha)}2\underline u^2\underline b^2_n\right]\right\}=x_0^\alpha C_1,
\end{split}
\end{equation}
for some $C_1>0$, which does not depend on $N$ or $h$.

Now, fix $\varepsilon>0$ and find $N=N(\varepsilon)$ such that, for $h<\frac T{N(\varepsilon)}$,
\[
\max\left\{e^{h^{1/2}2\hat K_2}-1, \,\,  1-e^{-h^{1/2}2\hat K_1}\   \right\}<\frac{\varepsilon}{C_1}.
\]
Then, for $N>N(\varepsilon)$, inequality \eqref{ineqGG*} holds.
\end{proof}
\subsection{Estimation of $\max \mathbf E \psii(x_N)$.}
\label{subsec:maxpsi}


\subsubsection{Estimation of $\mathbf E |x_N|^2$}
From \eqref{pres:0} we obtain, for $n=1, 2 ,\dots, N$,:
\begin{equation*}
\begin{split}
&\mathbf E|x_n|^2=|x_0|^2\prod_{i=1}^{n-1}\mathbf E\left|1+hu_ia_i+\sqrt{h}u_ib_i\xi_{i+1}\right|^2\\=
&x_0^2\prod_{i=1}^{n-1}\mathbf E\left[1+h(2u_ia_i+hu^2_ia^2_i+ u^2_ib^2_i)+2\sqrt{h}(1+hu_ia_i)u_ib_i\xi_{i+1}+hu^2_ib^2_i(\xi_{i+1}^2-1)\right]\\=
&x_0^2\prod_{i=1}^{n-1}\left[1+h(2u_ia_i+hu^2_ia^2_i+ u^2_ib^2_i)\right]\le
x_0^2\prod_{i=1}^{n-1}\left[1+hK_3\right]\le |x_0|^2\left[1+hK_3\right]^n,
\end{split}
\end{equation*}
so
\begin{equation}
\label{est:E2}
\mathbf E|x_N|^2\le x_0^2\left[1+hK_3\right]^N=|x_0|^2e^{NhK_3}= x_0^2e^{K_3T}.
\end{equation}

\subsubsection{Estimation of $\max\mathbf E\psii(x_N)$.}
\label{subsec:ito}
Substituting  the value $\uu ^*$ from \eqref{def:pin} into  \eqref{def:mathcalG} we get
\begin{equation}
\label{calc:Gn*}
\begin{split}
\mathcal{G}(\uu ^*)=
x_0^\alpha\prod_{n=0}^{N-1}\left[1+\alpha
h\frac{a^2_n}{2(1-\alpha)b^2_n}\right].
\end{split}
\end{equation}

\begin{lemma}
\label{lem:maxpsi}
Let Assumptions \ref{as:ab}, \ref{as:xi}  and condition \eqref{bound:pi}  hold. Let  $x_n$ be a solution to \eqref{eq:discrw} with a positive initial value $x_0>0$, a parameter $h=\frac TN$ and  a strategy $\uu$. Let  $\mathcal{G}(\uu^*)$ be  defined  as in \eqref{calc:Gn*} and $\psii$ be  defined as in \eqref{def:psi}. Then, for each $\varepsilon>0$, there exists $N(\varepsilon)\in \mathbf N$ such that for all $N>N(\varepsilon)$, $h\le \frac T{N(\varepsilon)}$, we have
\begin{equation}
\label{ineqGGG*}
|\sup_{u} \mathbf E \psii(x_N)-\mathcal{G}(\uu ^*)|\le \varepsilon.
\end{equation}
\end{lemma}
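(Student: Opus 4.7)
The plan is to sandwich $\sup_{u} \mathbf{E}\psii(x_N)$ between $\mathcal{G}(\uu^*) \pm \varepsilon$ for $N$ sufficiently large, by combining the three previously proved lemmas (Lemmas~\ref{lem:pos}, \ref{lem:maxphi}, and~\ref{lem:difGG}) with the second-moment estimate \eqref{est:E2}.

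For the \emph{upper bound}, I would observe that $\psii(x) \le \phi(x)$ holds for every $x \in \mathbf{R}$: for $x \ge 0$ it is an equality, and for $x < 0$ one has $\psii(x) = Lx \le 0 \le |x|^\alpha = \phi(x)$. Therefore, for every admissible $\uu$, $\mathbf{E}\psii(x_N) \le \mathbf{E}\phi(x_N) = G(\uu)$. Taking the supremum and using Lemma~\ref{lem:difGG} (whose error constants depend only on the bounds in \eqref{bounds:ab}--\eqref{bound:pi}, hence are uniform in $\uu$) together with Lemma~\ref{lem:maxphi} yields $\sup_{u} \mathbf{E}\psii(x_N) \le \sup_{u} G(\uu) \le \sup_{u}\mathcal{G}(\uu) + \varepsilon/2 = \mathcal{G}(\uu^*) + \varepsilon/2$ for all $N$ beyond some threshold.

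For the \emph{lower bound}, I would plug in the specific strategy $\uu^*$ and split over the event $\Omega_N = \{x_n^* > 0, \ n = 1, \dots, N\}$ from Lemma~\ref{lem:pos}. Since $\psii = \phi$ on $(0, \infty)$, one gets
\begin{equation*}
\mathbf{E}\psii(x_N^*) = G(\uu^*) - \mathbf{E}[\phi(x_N^*)\mathbf{1}_{\Omega_N^c}] + \mathbf{E}[\psii(x_N^*)\mathbf{1}_{\Omega_N^c}].
\end{equation*}
By Cauchy--Schwarz, Jensen's inequality (using $\alpha \in (0,1)$), and the pointwise bound $|\psii(x)| \le |x|^\alpha + L|x|$, both remainder terms are controlled by $C(L)\,(\mathbf{E}|x_N^*|^2)^{1/2}\,\mathbb{P}(\Omega_N^c)^{1/2}$, with $C(L)$ independent of $N$. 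The second moment is uniformly bounded in $N$ by \eqref{est:E2}, while Lemma~\ref{lem:pos} makes $\mathbb{P}(\Omega_N^c)$ arbitrarily small for large $N$. Combining with Lemma~\ref{lem:difGG} applied to $\uu^*$ then yields $\mathbf{E}\psii(x_N^*) \ge \mathcal{G}(\uu^*) - \varepsilon/2$, which is a lower bound for $\sup_u \mathbf{E}\psii(x_N)$.

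The main obstacle is the term $\mathbf{E}[\psii(x_N^*)\mathbf{1}_{\Omega_N^c}]$, which is negative and scales like $L$ whenever $x_N^* < 0$; this is the precise point where the non-concavity of $\psii$ shows up. The remedy is that although $L$ is fixed and possibly large, the estimate $\mathbb{P}(\Omega_N^c) = O(h)$ extracted from the proof of Lemma~\ref{lem:pos} combines with the uniform $L^2$-bound on $x_N^*$ to produce an error of order $L\,h^{1/2}$, which is arbitrarily small for $N$ large. Choosing $N(\varepsilon)$ as the maximum of the thresholds arising from the two estimates completes the proof of \eqref{ineqGGG*}.
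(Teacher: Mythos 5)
Your proof is correct and follows essentially the same route as the paper: both arguments rest on Lemma~\ref{lem:pos}, the second-moment bound \eqref{est:E2} combined with Cauchy--Schwarz over $\Omega\setminus\Omega_N$ to control the discrepancy between $\psii$ and $\phi$, and then Lemmas~\ref{lem:difGG} and~\ref{lem:maxphi}. The only (cosmetic) difference is that you organize the estimate as a two-sided sandwich, using the pointwise inequality $\psii\le\phi$ for the upper bound, whereas the paper bounds $\left|\mathbf E \psii(x_N)-\mathbf E \phi(x_N)\right|\le K_4\gamma^{1/2}$ uniformly over all admissible strategies in a single step.
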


\begin{proof}
Fix $\gamma\in (0, 1)$ and find $N(\gamma)$ by Lemma \ref{lem:pos}. Then, by definition \eqref{def:psi} of $\psii$, for $\Omega _N$, defined by \eqref{def:OmegaN} with $N\ge N(\gamma)$,  we have
\[
\psii(x_N(\omega))=\phi(x_N(\omega))=|x_N(\omega)|^\alpha, \quad \omega\in \Omega_N,
\]
so
\[
\mathbf P\{\omega:  \psii(x_N(\omega))\neq \phi(x_N(\omega))\}\le \mathbb P[\Omega\setminus\Omega_N]\le \gamma.
\]
Further,
\begin{equation}
\label{est:dif2}
\begin{split}
&\hspace{-2.3cm}\mathbf E\left|\phi(x_N)- \psii(x_N)\right|=\int_{\Omega}\left|\phi(x_N(\omega))- \psii(x_N(\omega))\right|dP(\omega)\le\\&\int_{\Omega\setminus \Omega_N}\left[\left|x_N(\omega)\right|^\alpha+L\left|x_N(\omega)\right| \right]dP(\omega)=\\&\int_{\Omega\setminus \Omega_N}\left|x_N(\omega)\right|^\alpha dP(\omega)+
L\int_{\Omega\setminus \Omega_N}\left|x_N(\omega)\right| dP(\omega)\le
\\&\left(\int_{\Omega\setminus \Omega_N}\left|x_N(\omega)\right|^{2} dP(\omega)\right)^{\frac{\alpha}2}\times \left(\int_{\Omega\setminus \Omega_N}dP(\omega)\right)^{\frac{2-\alpha}2}+\\&
L\left(\int_{\Omega\setminus \Omega_N}\left|x_N(\omega)\right|^{2} dP(\omega)\right)^{\frac 12}\times \left(\int_{\Omega\setminus \Omega_N}dP(\omega)\right)^{\frac 12}\le\\&
\left(\int_{\Omega}\left|x_N(\omega)\right|^{2} dP(\omega)\right)^{\frac{\alpha}2}\left(\mathbf P\{\Omega\setminus \Omega_N\}\right)^{\frac{2-\alpha}2}+\\&L\left(\int_{\Omega}\left|x_N(\omega)\right|^{2} dP(\omega)\right)^{\frac 12}\left(\mathbf P\{\Omega\setminus \Omega_N\}\right)^{\frac 12}\le\\&
\hspace{4.5cm}\left(\mathbf E|x_N|^2\right)^{\frac{\alpha}2}\gamma^{\frac{2-\alpha}2}+L\left(\mathbf E|x_N|^2\right)^{\frac 12}\gamma^{\frac 12}.
\end{split}
\end{equation}
Since $1-\frac \alpha 2>\frac 12$ and $\gamma\in (0, 1)$ estimates \eqref{est:E2} and \eqref{est:dif2} imply
\[
\mathbf E\left|\phi(x_N)- \psii(x_N)\right|\le K_4\gamma^{\frac12},
\]
where $K_4>0$ does not depend on $N$.

Then
\[
 \left|\mathbf E \psii(x_N)-\mathbf E \phi(x_N)\right|\le \mathbf E\left|\phi(x_N)- \psii(x_N)\right|\le K_4 \gamma^{\frac 12}.
\]
Now, fix  $\varepsilon>0$, and choose
\begin{equation}
\label{def:gamma}
\gamma<\left(\frac{\varepsilon}{2K_4}\right)^{2}.
\end{equation}
By  Lemma \ref{lem:pos},  find $N(\gamma)$.  By Lemma \ref{lem:difGG}, find $N(\varepsilon/2)\ge N(\gamma)$ such that, for each admissible strategy $\uu$, $N\ge  N(\varepsilon/2)$  and $h\le \frac T{N(\varepsilon/2)}$ we have
\begin{equation}
\label{est:varep2}
|G(\uu)-\mathcal{G}(\uu )|\le \varepsilon/2.
\end{equation}
Recall that $\mathbf E \phi(x_N)=G(\uu )$ and $\sup_{u}\left[\mathcal G(\uu)\right]=\mathcal G(\uu^*)$. Then, by \eqref{def:gamma} and \eqref{est:varep2}   we have, for  $N\ge  N(\varepsilon/2)$ and $h\le \frac T{N(\varepsilon/2)}$,
\begin{equation}
\label{est:maximum}
\begin{split}
&|\sup_{u} \mathbf E \psii(x_N)-\mathcal G(\uu ^*)|=|\sup_{u}\left[ \mathbf E \psii(x_N) -\mathbf E \phi(x_N)+G(\uu )-\mathcal G(\uu)+\mathcal G(\uu)\right]-\mathcal G(\uu ^*)|\le
\\&|\sup_{u}\left[ \mathbf E \psii(x_N) - \mathbf E \phi(x_N)\right]|+|\sup_{u}\left[G(\uu )-\mathcal G(\uu)\right]|+|\sup_{u}\left[\mathcal G(\uu)\right]-\mathcal G(\uu^*)|\le
\\&
\le K_4\gamma^{\frac 12}+\frac \varepsilon 2\le \varepsilon.
\end{split}
\end{equation}
\end{proof}


Now we are able  to complete the proof of Theorem \ref{ThM}.
For small enough $h=\frac TN$,  the  strategy $u^*$ defined by \eqref{def:pin}  maximize $\mathbf E \psii(x_N)$ approximately, meaning that
\[
\sup_{u} \mathbf E \psii(x_N)=x_0^2\prod_{n=0}^{N-1}\left[1+\alpha h\frac{a^2_n}{2(1-\alpha)b^2_n}\right]+\rho(N),
\]
where $\rho(N)\to 0$ as $N\to \infty$. Then the proof of Theorem \ref{ThM} follows.
\bibliographystyle{ieeetr}

\section*{Simulations }
The equation of type \eqref{eq:discrw} with $u_n\equiv 1$, $a_n\equiv \lambda$, $b_n\equiv \mu$ was considered in  \citet{Palm}, where an explicit bound on $h$ suitable for application of  the discrete It\^o formula was computed and error terms were estimated.

\citet{AGRK} developed an  asymptotic estimate on the number of mesh points $N(\gamma)$ required to ensure the positivity of solutions of Euler-Maruyama  discretization
of \eqref{eq:contw} as the required proportion of positive trajectories $\gamma$ approaches 1. Based on the above works one can estimate $h$ (or $N$) for equation \eqref{eq:discrw}  in order to be able to apply the discrete It\^o formula and to ensure  positivity with given probability $\gamma$.

For the purposes of the present paper, it suffices to demonstrate the impact  of the sampling interval of discretization on
the performance of Merton's strategy. In addition, we want to show the impact  of the selection  of finite $L$ in the adjusted utility function.
We remind that the classical concave utility function corresponds to  the case $L=+\infty$ that we excluded.

Table \ref{tab1} presents  the results of the numerical simulations. This table shows the differences
 $E U(X^*_T) -E U(x_N^*)$,  where $E U(x^*_N)$ is the expected utility for the strategy described in  Theorem \ref{ThM}, and where
   $$
   E U(X^*_T)=E {X^*_T}^\alpha=\exp\left(\frac{a}{2b^2}\frac{\alpha}{1-\alpha}\right)
    $$
    is the expected utility for Merton's strategy in the continuous time setting. Note that the value $X^*_T$ is non-negative, and therefore is not impacted by the choice of $L$.

   The table shows the values of these differences for
  the parameters   $L=10,10^2,10^3,10^5,10^6$ presented in (\ref{def:psi}) and for  $N=2,6,12,52,250$, with $X_0=1$, $T=1$, $\alpha=1/2$, $T=1$,
 $a = 0.07$, $b = 0.2$, with 1,000,000 simulations each. For these parameters,   $
   E U(X^*_T)=1.0631$.

The values  $N$  show the numbers of allowed portfolio adjustments during one year time period, with $\delta=T/N=1/N$.

It can be seen that $n=52$ (i.e, $\delta=1/52$) that corresponds to weekly portfolio adjustments is sufficient to compensate the discretization error for Merton's strategy.
This error is almost negligible for  $n=250$ (i.e, $\delta=1/250$) that corresponds to daily portfolio adjustments.
\begin{table}[H]\label{tab1}
\center
\begin{tabular}{|l||*{6}{c|}}\hline
\backslashbox{L}{N}
&\makebox[3em]{2}&\makebox[3em]{6}&\makebox[3em]{12}
&\makebox[3em]{52}&\makebox[3em]{250}\\\hline\hline
$10$      &0.054938	& 0.004632	& 0.001674	& 0.000803	& $9.506101 \times 10^{-5}$	\\\hline
$10^2$    &0.462369	& 0.010440	& 0.001706	& 0.000803	& $9.506101 \times 10^{-5}$ \\\hline
$10^3$    &4.536681	& 0.068490	& 0.002021	& 0.000803  & $9.506101 \times 10^{-5}$	\\\hline
$10^5$    &452.7109	& 6.453968	& 0.036671	& 0.000803	& $9.506101 \times 10^{-5}$	\\\hline
$10^6$    &4527.022	& 64.50377	& 0.351671	& 0.000803	& $9.506101 \times 10^{-5}$	\\\hline
\end{tabular}
\caption{The differences $\sup_{N,u}E X_T^\alpha -E U(X_N)$ in Theorem \ref{ThM} for different values of $L$ in (\ref{def:psi}) and for different numbers $N$ of portfolio adjustments during one year time period. }
\end{table}

\section*{Conclusions }  We have investigated the possibility of using
known optimal continuous time strategies for solving
the discrete time optimal portfolio selection problems.
For this, we studied  the limit
properties of the discrete time optimal portfolio selection problem
when the step of the discretization converges to zero.  We found that
the  performance of the discrete time strategy obtained directly
from Merton's strategy approximates the optimal strategy after
some minor adjustment of the utility function.
This
suboptimal discrete time strategy is myopic.
  The proof is based  on the application of
a  discrete  It\^o formula.  The results  of this paper leads to the  conclusion that Merton's strategies can be  used effectively for discrete time multi-period market models.


  \subsubsection*{Acknowledgment} This work  was supported by ARC grant of Australia DP120100928.\\
The authors extend their appreciations to the anonymous referees for their valuable remarks {  which  helped us} to improve the paper.

\bibliographystyle{apalike}

 \end{document}